\colorlet{boxgray}{lightgray!20}
\newtheorem{definition}{Definition}
\newtheorem{theorem}{Theorem}
\newtheorem{proposition}{Proposition}
\newtheorem{example}{Example}
\newtheorem{proof}{Proof}
\newsavebox{\ieeealgbox}
\newcommand{\shorten}[1]{}
\begin{document}
\pagenumbering{gobble}
\bstctlcite{IEEEexample:BSTcontrol}
%
%
\title{On the Construction of Quasi-Binary and Quasi-Orthogonal Matrices over Finite Fields}
%
%

%
\author{\IEEEauthorblockN{Danilo Gligoroski\IEEEauthorrefmark{1}
and
Kristian Gj\o{}steen\IEEEauthorrefmark{2}
and 
Katina Kralevska\IEEEauthorrefmark{1} 
}
\IEEEauthorblockA{
\IEEEauthorrefmark{1}Department of Information Security and Communication Technologies,
\IEEEauthorrefmark{2}Department of Mathematics, \\
Norwegian University of Science and Technology, 
Trondheim, Norway,\\
Email: \{danilog, kristianj, katinak\}@ntnu.no\\
}
}
\maketitle

\begin{abstract} Orthogonal and quasi-orthogonal matrices have a long history of use in digital image processing, digital and wireless communications, cryptography and many other areas of computer science and coding theory. The practical benefits of using orthogonal matrices come from the fact that the computation of inverse matrices is avoided, by simply using the transpose of the orthogonal matrix. In this paper, we introduce a new family of matrices over finite fields that we call \emph{Quasi-Binary and Quasi-Orthogonal Matrices}. We call the matrices quasi-binary due to the fact that matrices have only two elements $a, b \in \mathbb{F}_q$, but those elements are not $0$ and $1$. In addition, the reason why we call them quasi-orthogonal is due to the fact that their inverses are obtained not just by a simple transposition, but there is a need for an additional operation: a replacement of $a$ and $b$ by two other values $c$ and $d$. We give a simple relation between the values $a, b, c$ and $d$ for any finite field and especially for finite fields with characteristic 2. Our construction is based on incident matrices from cyclic Latin Rectangles and the efficiency of the proposed algorithm comes from the avoidance of matrix-matrix or matrix-vector multiplications. 
\end{abstract}
\vspace{.2cm}
\begin{IEEEkeywords}
Orthogonal matrices, Quasi-orthogonal matrices, Quasi-binary matrices, Latin Rectangles \end{IEEEkeywords}

\section{Introduction}
Orthogonal matrices have been used in numerous applied fields such as signal processing \cite{horadam2007hadamard}, image processing \cite{Balonin2015}, coding theory \cite{sharma2014construction}, cryptology \cite{gligoroski2008edon}, and wireless communication \cite{adams2009journey,898239}. The practical benefits of using orthogonal matrices come from the fact that the computation of inverse matrices is avoided, simply by using the transpose of the orthogonal matrix. 

Probably the most well known orthogonal matrices are Hadamard matrices that were first defined by Sylvester in \cite{sylvester1867lx} and extensively studied by Hadamard in \cite{hadamard1893resolution}. The concept of orthogonality has been extended into various concepts of quasi-orthogonality when some of the conditions or properties of the matrices are relaxed or generalized (see for example \cite{Jafarkhani2001,sharma2003improved}).

A pioneering work in the orthogonal matrices over finite fields was done by MacWilliams \cite{macwilliams1971orthogonal}. That work was extended by Byrd and Vaughan in \cite{byrd1978counting} and Zhang in \cite{zhang1997construction} who proposed algorithms for constructing orthogonal circulant matrices for arbitrary dimensions $n$ in any finite field. 

\subsection{Our contribution}
In this paper, we introduce a new family of matrices over finite fields that we call \emph{Quasi-Binary and Quasi-Orthogonal Matrices} and we give an algorithm for their construction. We call the matrices \emph{quasi-binary} due to the fact that matrices have only two elements $a, b \in \mathbb{F}_q$ where $\mathbb{F}_q$ is a finite field of size $q$ and those elements are not $0$ and $1$. Additionally, we call these matrices \emph{quasi-orthogonal} because their inverses are not obtained just by a simple transposition, but there is a need for an additional operation: a replacement of $a$ and $b$ by two other values $c$ and $d$ where $c, d \in \mathbb{F}_q$.

We define an efficient algorithm for constructing quasi-binary and quasi-orthogonal matrices. The algorithm is based on incident matrices from cyclic Latin Rectangles. The algorithm is very efficient since it completely avoids operations of matrix-matrix or matrix-vector multiplications. We have implemented it in several computer languages such as C, SageMath (Python) \cite{sage} and Mathematica\footnote{As an online addition to this article, a free C source code is available and can be found as part of Edon-K and Edon-S packets in SUPERCOP cryptographic benchmarking tool \cite{SUPERCOP}}. We show several examples of these matrices.

The rest of the paper is organized as follows. Section \ref{sec:Prelim} gives some mathematical definitions and preliminaries. The main algorithm and some examples of its output are presented in Section \ref{sec:algorithm}. Section \ref{sec:conclusions} concludes the paper.

\section{Mathematical Preliminaries }\label{sec:Prelim}
We give here some basic definitions and mathematical preliminaries about orthogonal matrices over finite fields. Readers interested in more information about this area are referred to \cite{zhang1997construction}.

\begin{definition}
	A matrix $\mathbf{P}$ is binary if its elements are either $0$ or $1$.
\end{definition}

\begin{definition}
	A square nonsingular matrix $\mathbf{P}$ is orthogonal if $\mathbf{P}^{-1} = \mathbf{P}^{T}$.
\end{definition}

\begin{definition}
	Let $\mathbf{P}$ be an $n\times n$ orthogonal binary matrix, and let $a \neq b$ are two nonzero elements from $\mathbb{F}_q$. We call the matrix $\mathbf{P}_{a, b}$ obtained from $ \mathbf{P}$ a quasi-binary quasi-orthogonal matrix if every value of $0$ is replaced by $a$ and every value of $1$ is replaced by $b$. For the inverse matrix $\mathbf{P}^{-1}_{a, b}$ it holds that $\mathbf{P}^{-1}_{a, b} = (\mathbf{P}_{a, b})^{T}_{c, d}$, i.e., every value of $a$ in the transpose matrix $(\mathbf{P}_{a, b})^{T}$ is replaced by $c$ and every value of $b$ is replaced by $d$.
	\label{Definition:QuasiBinaryQuasiOrtho}
\end{definition}

\begin{example}
	Let us work in this example in the finite field $\mathbb{F}_{16}$ with an irreducible polynomial: $x^4+x^3+1$. Next, instead of representing the elements $a \in \mathbb{F}_{16}$ as polynomials $a = a_3 x^3 + a_2 x^2 + a_1 x + a_0 $, where $a_i \in \{0, 1\}$, let us represent the elements with the corresponding integer values from their binary representation.  For example, if $a = 0\cdot x^3 + 1\cdot  x^2 + 1 \cdot x + 1 $, then we write $a = (0, 1, 1, 1)_2 = 7$. 
	
	Let us have the following binary matrix: $$\mathbf{P} = \left(
	\begin{array}{cccccccc}
	0 & 1 & 0 & 1 & 1 & 0 & 0 & 0 \\
	0 & 0 & 1 & 0 & 0 & 1 & 0 & 1 \\
	0 & 0 & 0 & 0 & 0 & 1 & 1 & 1 \\
	0 & 0 & 1 & 0 & 0 & 1 & 1 & 0 \\
	0 & 0 & 1 & 0 & 0 & 0 & 1 & 1 \\
	1 & 1 & 0 & 0 & 1 & 0 & 0 & 0 \\
	1 & 0 & 0 & 1 & 1 & 0 & 0 & 0 \\
	1 & 1 & 0 & 1 & 0 & 0 & 0 & 0 \\
	\end{array}
	\right) .$$ 
	
	It is easy to check that $\mathbf{P} \cdot \mathbf{P}^T = I$, i.e. that $\mathbf{P}$ is an orthogonal binary matrix. Let us choose the following values: $a = 7$, $b = 13$ and their corresponding values $c = 4$ and $d = 15$. Then the two quasi-binary and quasi-orthogonal matrices are:
	$$\mathbf{P}_{7, 13} = \small
	\left(
	\begin{array}{cccccccc}
	7 & 13 & 7 & 13 & 13 & 7 & 7 & 7 \\
	7 & 7 & 13 & 7 & 7 & 13 & 7 & 13 \\
	7 & 7 & 7 & 7 & 7 & 13 & 13 & 13 \\
	7 & 7 & 13 & 7 & 7 & 13 & 13 & 7 \\
	7 & 7 & 13 & 7 & 7 & 7 & 13 & 13 \\
	13 & 13 & 7 & 7 & 13 & 7 & 7 & 7 \\
	13 & 7 & 7 & 13 & 13 & 7 & 7 & 7 \\
	13 & 13 & 7 & 13 & 7 & 7 & 7 & 7 \\
	\end{array}
	\right)$$ and 
	$$\mathbf{P}_{4, 15} = \small
	\left(
	\begin{array}{cccccccc}
	4 & 4 & 4 & 4 & 4 & 15 & 15 & 15 \\
	15 & 4 & 4 & 4 & 4 & 15 & 4 & 15 \\
	4 & 15 & 4 & 15 & 15 & 4 & 4 & 4 \\
	15 & 4 & 4 & 4 & 4 & 4 & 15 & 15 \\
	15 & 4 & 4 & 4 & 4 & 15 & 15 & 4 \\
	4 & 15 & 15 & 15 & 4 & 4 & 4 & 4 \\
	4 & 4 & 15 & 15 & 15 & 4 & 4 & 4 \\
	4 & 15 & 15 & 4 & 15 & 4 & 4 & 4 \\
	\end{array}
	\right). 
	$$
	\label{Example:QuasiBinQuasiOrth}
\end{example}

Without going deeper in the Combinatorics, we give here some basic definitions and basic propositions for Latin Rectangles and incidence matrices. Proofs of these basic properties can be found for example in \cite{Stinsonbook}.

\begin{definition}
	A $k\times n$ Latin Rectangle $L=[l_{i,j}]_{k \times n}$ is a $k\times n$ array (where $k \leq n$) in which every row $R_0, \ldots, R_{k-1}$ is a permutation of an $n$-element set $X=\{0, 1, \ldots, n-1\}$, and the elements in each column $C_0, C_1, \ldots, C_{n-1}$ appear at most once. Note that zero-indexing style is used, i.e. $i, j \in \{0, 1, \ldots, n-1\}$.
	\label{Latinrectangle}
\end{definition}

\begin{definition}
	Let $L=[l_{i,j}]_{k \times n}$ be a Latin Rectangle, with columns $C_0, C_1, \ldots, C_{n-1} $. The incidence matrix of $L$ is the $n\times n$ binary matrix $\mathbf{M} = (m_{i,j})$  defined by the rule
	$$
	m_{i,j} =
	\begin{cases}
	1, & \text{if}\  i\in C_j, \nonumber\\
	0, & \text{if}\  i\notin C_j. \nonumber\\
	\end{cases}
	$$
	\label{Definition:incidencematrix}
\end{definition}

\begin{example}
	Let $k=3$ and $n=8$, and let\\ $L=[l_{i,j}]_{3 \times 8} = \left[
	\begin{array}{cccccccc}
	6 & 5 & 4 & 3 & 1 & 7 & 0 & 2 \\
	4 & 3 & 1 & 7 & 0 & 2 & 6 & 5 \\
	1 & 7 & 0 & 2 & 6 & 5 & 4 & 3 \\
	\end{array}
	\right].$ Then the corresponding incidence matrix is: $$\mathbf{M} = \left(
	\begin{array}{cccccccc}
	0 & 0 & 1 & 0 & 1 & 0 & 1 & 0 \\
	1 & 0 & 1 & 0 & 1 & 0 & 0 & 0 \\
	0 & 0 & 0 & 1 & 0 & 1 & 0 & 1 \\
	0 & 1 & 0 & 1 & 0 & 0 & 0 & 1 \\
	1 & 0 & 1 & 0 & 0 & 0 & 1 & 0 \\
	0 & 1 & 0 & 0 & 0 & 1 & 0 & 1 \\
	1 & 0 & 0 & 0 & 1 & 0 & 1 & 0 \\
	0 & 1 & 0 & 1 & 0 & 1 & 0 & 0 \\
	\end{array}
	\right).$$ 
	\label{Ex:LatSq}
\end{example}

\begin{proposition}
	The incidence matrix $M = (m_{i,j})$ of any Latin Rectangle with dimensions $k\times n$ is balanced matrix with $k$ ones in each row and each column.
	\label{Proposition:balanced}
\end{proposition}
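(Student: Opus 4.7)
The plan is to prove the column statement and the row statement separately, each by directly unwinding the definition of the incidence matrix together with the two defining properties of a Latin Rectangle (rows are permutations of $\{0,\ldots,n-1\}$; columns contain each element at most once).

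For a fixed column index $j$, I would look at the $j$-th column $C_j$ of $L$. It has exactly $k$ entries, one per row, and by the Latin Rectangle property those $k$ entries are pairwise distinct. Hence, viewed as a subset of $\{0,1,\ldots,n-1\}$, $|C_j|=k$. By Definition \ref{Definition:incidencematrix}, $m_{i,j}=1$ precisely when $i\in C_j$, so $\sum_{i=0}^{n-1} m_{i,j}=|C_j|=k$, which gives exactly $k$ ones in column $j$ of $\mathbf{M}$.

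For a fixed row index $i$, I would instead count how many columns of $L$ contain the symbol $i$. Since every row $R_0,\ldots,R_{k-1}$ is a permutation of $\{0,\ldots,n-1\}$, the symbol $i$ appears exactly once in each of the $k$ rows of $L$, for a total of $k$ occurrences. The column-distinctness property forces these $k$ occurrences to lie in $k$ distinct columns of $L$. Therefore $\{j : i\in C_j\}$ has cardinality exactly $k$, and by the definition of $\mathbf{M}$ this is precisely the number of $1$'s in row $i$.

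Combining the two counts yields that $\mathbf{M}$ is balanced with $k$ ones in each row and each column, as claimed. I do not anticipate a substantive obstacle: the argument is a direct double-count that merely re-expresses the Latin Rectangle axioms through the incidence relation. The only thing to be careful about is keeping straight which property (row-permutation versus column-distinctness) is invoked for which of the two sums, since they play opposite roles in the row and column counts.
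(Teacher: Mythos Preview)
Your argument is correct: the column count uses column-distinctness to conclude $|C_j|=k$, and the row count uses that each row is a permutation (so $i$ occurs once per row) together with column-distinctness (so these $k$ occurrences sit in distinct columns). The paper does not actually prove this proposition---it only states it and refers the reader to \cite{Stinsonbook}---so there is nothing to compare against; your direct double-count is exactly the standard elementary proof one would give.
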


\begin{proposition}
	Let $M = (m_{i,j})$ be the incidence matrix of a Latin Rectangle with dimensions $k\times n$ and even $n$. If $M$ is nonsingular, then $k$ is odd.
	\label{Proposition:odd}
\end{proposition}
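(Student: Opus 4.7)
The plan is to prove the contrapositive: if $k$ is even then $M$ must be singular. The only combinatorial input needed is Proposition 1, which tells us that every row of $M$ contains exactly $k$ ones. Setting $\mathbf{1} = (1, 1, \ldots, 1)^T \in \mathbb{F}_q^n$, this immediately gives $M \mathbf{1} = k \cdot \mathbf{1}$, so $\mathbf{1}$ is an eigenvector of $M$ with eigenvalue $k$.

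First I would fix the ambient field explicitly. The paper's setting is a finite field $\mathbb{F}_q$ of characteristic $2$ (cf.\ Example 1, where $q = 16$), since the quasi-binary matrices $\mathbf{P}_{a,b}$ and their inverses live over such a field. In such an $\mathbb{F}_q$ the integer $k$ reduces to $0$ precisely when $k$ is even.

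Next, assuming $k$ is even, the eigenvalue relation above becomes $M \mathbf{1} = 0$. Since $\mathbf{1} \neq 0$, the kernel of $M$ is nontrivial, so $M$ is singular. Taking the contrapositive yields the statement: if $M$ is nonsingular, then $k$ must be odd.

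The argument is really a one-line application of the all-ones eigenvector trick, so there is no deep obstacle; the only thing to be careful about is pinning down the field over which ``nonsingular'' is intended. Over $\mathbb{Q}$ the conclusion is false in general (for instance, cyclic $4\times 6$ Latin Rectangles exist whose incidence matrix has nonzero integer determinant), so one really does need to invoke characteristic $2$, which is the setting implicit in the rest of the paper. I would also point out in the write-up that the hypothesis ``$n$ even'' is not used anywhere in the proof; it merely identifies the range of dimensions relevant to the quasi-binary quasi-orthogonal construction being built up in Section~\ref{sec:algorithm}.
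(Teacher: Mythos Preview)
The paper does not actually prove this proposition; it states just before Definition~\ref{Latinrectangle} that ``proofs of these basic properties can be found for example in \cite{Stinsonbook}.'' So there is no in-paper argument to compare against.

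Your proof is correct and is the natural one: Proposition~\ref{Proposition:balanced} gives $M\mathbf{1}=k\,\mathbf{1}$, and in characteristic~$2$ an even $k$ reduces to $0$, so $\mathbf{1}$ lies in the kernel and $M$ is singular. You are also right that the hypothesis ``$n$ even'' is never invoked in this argument, and right to insist on fixing the ground field --- the statement is really about $\mathbb{F}_{2^m}$, which is the ambient setting throughout the paper (cf.\ Theorem~\ref{Theorem:cd} and the orthogonality checks in Example~\ref{Example:QuasiBinQuasiOrth}).

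One small caution on your aside: the specific $4\times 6$ cyclic witness deserves a second look, since for many natural cyclic constructions with even $n$ the resulting circulant picks up a zero eigenvalue at $\omega^{n/2}=-1$ and is singular over $\mathbb{Q}$ as well. A cleaner way to make your point that characteristic~$2$ is essential is to use odd $n$: the $2\times 5$ cyclic rectangle with $R_0=(0,1,2,3,4)$ and $rot=1$ yields a circulant incidence matrix whose eigenvalues are $1+\omega^{j}$ for $\omega=e^{2\pi i/5}$, all nonzero, hence nonsingular over $\mathbb{Q}$ despite $k=2$ being even.
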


\begin{definition}
	Let $R_0$ be a permutation of the $n$-element set $X=\{0, 1, \ldots, n-1\}$. Let $gcd(n, rot)=1$ and let $R_i = RotateLeft(R_0, i\times rot)$ are obtained with left rotation of the initial permutation $R_0$ by $i \times rot$ positions. Then the Latin Rectangle $L=[l_{i,j}]_{k \times n}$ with rows $R_0, \ldots, R_{k-1}$ is called a Cyclic Latin Rectangle. \label{Def:CyclicLatinRect}
\end{definition}

In Example \ref{Ex:LatSq}, the initial permutation is $R_0 = \{6, 5, 4, 3, 1, 7, 0, 2\}$, and $R_1$ and $R_2$ are obtained with its rotation by 2 elements to the left.

\section{Algorithm for Construction of Quasi-Binary and Quasi-Orthogonal Matrices}\label{sec:algorithm}
Instead of using Zhang's algorithms \cite{zhang1997construction} for construction of orthogonal matrices with arbitrary dimensions $n$, we use the definitions and the properties of Latin Rectangles defined in the previous Section. Note that similar types of matrices defined by Latin Rectangles have been used in various cryptographic algorithms such as hash functions \cite{gligoroski2008edon}, multivariate signature schemes \cite{gligoroski2011mqq}, and in various codes \cite{gligoroski2014families,kralevska2014joint,kralevska2013balanced}.

\begin{table}
	\normalsize
	\begin{tabular}{cccc}
		(8,3,2)     & (12,3,3)    & (16,3,4)    & (18,5,3)    \\
		(20,3,5)    & (24,3,6)    & (28,3,7)    & (30,5,5)    \\
		(32,3,8)    & (36,3,9)    & (40,3,10)   & (42,5,7)    \\
		(44,3,11)   & (48,3,12)   & (50,9,5)    & (52,3,13)   \\
		(54,5,9)    & (56,3,14)   & (60,3,15)   & (64,3,16)   \\
		(66,5,11)   & (68,3,17)   & (70,9,7)    & (72,3,18)   \\
		(76,3,19)   & (78,5,13)   & (80,3,20)   & (84,3,21)   \\
		(88,3,22)   & (90,5,15)   & (92,3,23)   & (96,3,24)   \\
		(98,13,7)   & (100,3,25)  & (102,5,17)  & (104,3,26)  \\
		(108,3,27)  & (110,9,11)  & (112,3,28)  & (114,5,19)  \\
		(116,3,29)  & (120,3,30)  & (124,3,31)  & (126,5,21)  \\
		(128,3,32)  & (130,9,13)  & (132,3,33)  & (136,3,34)  \\
		(138,5,23)  & (140,3,35)  & (144,3,36)  & (148,3,37)  \\
		(150,5,25)  & (152,3,38)  & (154,13,11) & (156,3,39)  \\
		(160,3,40)  & (162,5,27)  & (164,3,41)  & (168,3,42)  \\
		(170,9,17)  & (172,3,43)  & (174,5,29)  & (176,3,44)  \\
		(180,3,45)  & (182,13,13) & (184,3,46)  & (186,5,31)  \\
		(188,3,47)  & (190,9,19)  & (192,3,48)  & (196,3,49)  \\
		(198,5,33)  & (200,3,50)  & (204,3,51)  & (208,3,52)  \\
		(210,5,35)  & (212,3,53)  & (216,3,54)  & (220,3,55)  \\
		(222,5,37)  & (224,3,56)  & (228,3,57)  & (230,9,23)  \\
		(232,3,58)  & (234,5,39)  & (236,3,59)  & (238,13,17) \\
		(240,3,60)  & (242,21,11) & (244,3,61)  & (246,5,41)  \\
		(248,3,62)  & (250,9,25)  & (252,3,63)  & (256,3,64)  \\
	\end{tabular}
	\caption{Triplets $(n, k, rot)$ for which our routine can produce orthogonal binary matrices.}
	\label{Table:nkrot}
\end{table}

{\bf Observation:} For certain values of even $n$, there are values of $rot$ and $k$ such that the incidence matrices that correspond to the cyclic Latin Rectangles defined by Definition \ref{Def:CyclicLatinRect} are orthogonal.

In addition to the previous observation, we would like to add one more practical (computational efficiency) goal when designing orthogonal matrices: We are interested in cyclic Latin Rectangles that give incidence matrices that are orthogonal, with as small as possible number of rows $k$. By running a simple SageMath script for even $n$ in the range $\{8, 10, 12, \ldots, 256\} $, we constructed Table \ref{Table:nkrot} for the values of $(n, k, rot)$. The entries in that table are interpreted as follows: If the triplet $(n, k, rot)$ is present in the table, then there is an efficient procedure for generating orthogonal binary matrices of size $n \times n$ from a cyclic Latin Rectangle with $k$ rows, by generating one random permutation $R_0$ with $n$ elements $\{0, 1, \ldots, n-1\}$, and by producing $k-1$ subsequent rows with rotating $R_0$ by $rot$ positions to the left. 

Next we use the following property of orthogonal matrices:
\begin{proposition}
	If $A$ and $B$ are two orthogonal matrices then their product $A\cdot B$ is an orthogonal matrix.
	\label{Proposition:ProductOfOrthogonalMatrices}
\end{proposition}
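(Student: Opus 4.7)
The plan is to reduce orthogonality to the algebraic identity $PP^T = I$, which is equivalent to the condition $P^{-1} = P^T$ from the definition given in the excerpt (since a left inverse of a square matrix is also a right inverse). Once orthogonality is phrased this way, the result follows from just two facts: the transpose of a product reverses the order of the factors, i.e.\ $(AB)^T = B^T A^T$, and matrix multiplication is associative. Both facts hold over any ring, in particular over the finite field $\mathbb{F}_q$ we are working in, so nothing special needs to be checked for the base field.

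Concretely, I would first observe that $AB$ is nonsingular because $A$ and $B$ are (a product of invertible matrices is invertible, with inverse $B^{-1}A^{-1}$). Then I would compute
\[
(AB)(AB)^T = (AB)(B^T A^T) = A(BB^T)A^T = A \cdot I \cdot A^T = AA^T = I,
\]
using the orthogonality of $B$ in the middle step and of $A$ at the end. An entirely symmetric computation shows $(AB)^T(AB) = I$. Together these give $(AB)^{-1} = (AB)^T$, which is exactly the definition of $AB$ being orthogonal.

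There is essentially no obstacle here: the argument is the classical one from linear algebra, and it transports verbatim to matrices over $\mathbb{F}_q$ because the only ingredients used — associativity, the transpose-of-product rule, and the identity matrix acting trivially — are valid in any matrix ring over a commutative ring. The only thing worth mentioning explicitly is that ``nonsingular'' in Definition~2 must be justified for $AB$, but this is immediate from $\det(AB) = \det(A)\det(B) \neq 0$, or equivalently from exhibiting $B^{-1}A^{-1}$ as an inverse.
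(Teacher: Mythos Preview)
Your proof is correct and is precisely the standard argument; the paper itself states this proposition without proof, treating it as a well-known fact, so there is nothing to compare against beyond noting that your computation $(AB)(AB)^T = A(BB^T)A^T = AA^T = I$ is exactly what any textbook would supply.
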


Combining Proposition \ref{Proposition:balanced} and Proposition \ref{Proposition:ProductOfOrthogonalMatrices}, it gives us a way how to construct random looking binary orthogonal matrices. For a given set of parameters $(n, k, rot)$ from Table \ref{Table:nkrot} we produce several random instances of binary orthogonal matrices, and we multiply them all together to produce a final result.

\begin{example}
	The orthogonal matrix $\mathbf{P}$ from Example \ref{Example:QuasiBinQuasiOrth} is obtained as the product $\mathbf{P} = \mathbf{M}_1 \cdot \mathbf{M}_2 \cdot \mathbf{M}_3$ from the following Latin Rectangles and their corresponding incidence matrices:
	
	$L_1=[l_{i,j}]_{3 \times 8} = \left[
	\begin{array}{cccccccc}
	6 & 5 & 4 & 3 & 1 & 7 & 0 & 2 \\
	4 & 3 & 1 & 7 & 0 & 2 & 6 & 5 \\
	1 & 7 & 0 & 2 & 6 & 5 & 4 & 3 \\
	\end{array}
	\right]$ and $\mathbf{M}_1 = \left(
	\begin{array}{cccccccc}
	0 & 0 & 1 & 0 & 1 & 0 & 1 & 0 \\
	1 & 0 & 1 & 0 & 1 & 0 & 0 & 0 \\
	0 & 0 & 0 & 1 & 0 & 1 & 0 & 1 \\
	0 & 1 & 0 & 1 & 0 & 0 & 0 & 1 \\
	1 & 0 & 1 & 0 & 0 & 0 & 1 & 0 \\
	0 & 1 & 0 & 0 & 0 & 1 & 0 & 1 \\
	1 & 0 & 0 & 0 & 1 & 0 & 1 & 0 \\
	0 & 1 & 0 & 1 & 0 & 1 & 0 & 0 \\
	\end{array}
	\right),$
	
	$L_2=[l_{i,j}]_{3 \times 8} = \left[
	\begin{array}{cccccccc}
	3 & 6 & 1 & 7 & 4 & 2 & 0 & 5 \\
	1 & 7 & 4 & 2 & 0 & 5 & 3 & 6 \\
	4 & 2 & 0 & 5 & 3 & 6 & 1 & 7 \\
	\end{array}
	\right]$ and $\mathbf{M}_2 = \left(
	\begin{array}{cccccccc}
	0 & 0 & 1 & 0 & 1 & 0 & 1 & 0 \\
	1 & 0 & 1 & 0 & 0 & 0 & 1 & 0 \\
	0 & 1 & 0 & 1 & 0 & 1 & 0 & 0 \\
	1 & 0 & 0 & 0 & 1 & 0 & 1 & 0 \\
	1 & 0 & 1 & 0 & 1 & 0 & 0 & 0 \\
	0 & 0 & 0 & 1 & 0 & 1 & 0 & 1 \\
	0 & 1 & 0 & 0 & 0 & 1 & 0 & 1 \\
	0 & 1 & 0 & 1 & 0 & 0 & 0 & 1 \\
	\end{array}
	\right),$
	
	$L_3=[l_{i,j}]_{3 \times 8} = \left[
	\begin{array}{cccccccc}
	4 & 2 & 5 & 6 & 3 & 1 & 0 & 7 \\
	5 & 6 & 3 & 1 & 0 & 7 & 4 & 2 \\
	3 & 1 & 0 & 7 & 4 & 2 & 5 & 6 \\
	\end{array}
	\right]$ and $\mathbf{M}_3 = \left(
	\begin{array}{cccccccc}
	0 & 0 & 1 & 0 & 1 & 0 & 1 & 0 \\
	0 & 1 & 0 & 1 & 0 & 1 & 0 & 0 \\
	0 & 1 & 0 & 0 & 0 & 1 & 0 & 1 \\
	1 & 0 & 1 & 0 & 1 & 0 & 0 & 0 \\
	1 & 0 & 0 & 0 & 1 & 0 & 1 & 0 \\
	1 & 0 & 1 & 0 & 0 & 0 & 1 & 0 \\
	0 & 1 & 0 & 1 & 0 & 0 & 0 & 1 \\
	0 & 0 & 0 & 1 & 0 & 1 & 0 & 1 \\
	\end{array}
	\right).$  
\end{example}

The only remaining part is to find a way how to avoid a direct classical matrix multiplication procedure, but still to construct an orthogonal matrix that is a product of two orthogonal matrices. We achieve that goal with the following Proposition:

\begin{proposition}
	Let $L_1=[l_{i,j}]_{k \times n}$ with its corresponding columns $C^{(1)}_0, C^{(1)}_1, \ldots, C^{(1)}_{n-1}$ and $L_2=[l_{i,j}]_{k \times n}$ with its corresponding columns $C^{(2)}_0, C^{(2)}_1, \ldots, C^{(2)}_{n-1}$ are two Latin Rectangles, and let $\mathbf{M}_1$ and $\mathbf{M}_2$ are their corresponding incidence matrices. 
	
	Let define $n$ initial sets $D^{(0)}_i = \{i\}, i \in \{0,\ldots,n-1\}$, let the operator $\bigoplus$ denote exclusive union of sets and let us define $D^{(j)}_i$ with the following recursive relations:
	\begin{equation}
	D^{(j)}_i = \bigoplus_{l\in C^{(j)}_i} D^{(j-1)}_l, \ \ \text{for\ \ } i=0, 1, \ldots n-1,\ \ \text{and\ \ } j=1, 2
	\label{Equation:FastMatrixMultiplication}
	\end{equation} 
	Then, the sets $D^{(2)}_i, i \in \{0,\ldots,n-1\}$ are the support sets for the columns of the product matrix: $\mathbf{P} = \mathbf{M}_1 \cdot \mathbf{M}_2$.
	\label{Proposition:FastMatrixMultiplication}
\end{proposition}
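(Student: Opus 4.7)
The plan is to interpret everything through the language of $\mathbb{F}_2$-linear algebra and symmetric differences of sets, and then verify that the recursion in (\ref{Equation:FastMatrixMultiplication}) just recomputes the standard matrix product $\mathbf{M}_1\cdot\mathbf{M}_2$ column by column. First I would record the basic fact implicit in Definition \ref{Definition:incidencematrix}: the support of column $j$ of the incidence matrix of a Latin Rectangle is precisely $C_j$, i.e.\ $m_{i,j}=1 \Longleftrightarrow i\in C_j$. Next I would unwind the $(i,j)$-entry of $\mathbf{P}=\mathbf{M}_1\cdot\mathbf{M}_2$ computed over $\mathbb{F}_2$:
\[
p_{i,j} \;=\; \bigoplus_{l=0}^{n-1} (m_1)_{i,l}\,(m_2)_{l,j} \;=\; \big|\{\,l : i\in C^{(1)}_l \text{ and } l\in C^{(2)}_j\,\}\big| \bmod 2,
\]
so the support of column $j$ of $\mathbf{P}$ is exactly the symmetric difference $\bigoplus_{l\in C^{(2)}_j} C^{(1)}_l$.

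With this established, I would prove the claim by induction on $j\in\{1,2\}$ that $D^{(j)}_i$ is the support of column $i$ of the partial product. For $j=1$, since $D^{(0)}_l=\{l\}$ and the $l\in C^{(1)}_i$ are all distinct (rows of a Latin Rectangle are permutations, so the $k$ indices appearing in column $i$ are pairwise different), the exclusive union collapses to
\[
D^{(1)}_i \;=\; \bigoplus_{l\in C^{(1)}_i}\{l\} \;=\; C^{(1)}_i,
\]
which is the support of column $i$ of $\mathbf{M}_1$. For $j=2$, I would substitute this into the recursion to get $D^{(2)}_i = \bigoplus_{l\in C^{(2)}_i} C^{(1)}_l$, and then match it against the symmetric-difference formula for the support of column $i$ of $\mathbf{P}$ derived above. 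The equality is immediate once we recognise that an element $e$ lies in the symmetric difference iff it belongs to an odd number of the sets $C^{(1)}_l$ with $l\in C^{(2)}_i$, which is exactly the parity condition $p_{e,i}=1$.

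There is no real hard step in this argument; the only thing that requires mild care is the bookkeeping between three different incarnations of the same object: a $\{0,1\}$-column of an incidence matrix, the subset of $\{0,\ldots,n-1\}$ that it indicates, and the $\mathbb{F}_2$-vector being XORed. Once the convention ``support of column $j$ of $\mathbf{M}$ equals $C_j$'' is fixed, the proposition reduces to the tautology that matrix multiplication over $\mathbb{F}_2$ computes parities of pairwise incidences, equivalently symmetric differences of the selected columns. One small remark worth including in the written proof is that the procedure extends verbatim to an arbitrary number of factors $\mathbf{M}_1\cdots\mathbf{M}_r$ by iterating (\ref{Equation:FastMatrixMultiplication}) for $j=1,\ldots,r$, which is what is actually used in the algorithm of Section \ref{sec:algorithm}.
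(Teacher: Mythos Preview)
Your proposal is correct and follows essentially the same approach as the paper, which only offers a one-line sketch appealing to Definition~\ref{Definition:incidencematrix} and the standard definition of matrix multiplication. You simply unpack that sketch in full: identifying supports of incidence-matrix columns with the $C_j$, and reading matrix multiplication over $\mathbb{F}_2$ as iterated symmetric difference.
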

\begin{proof}
	(Sketch) The correctness of relation (\ref{Equation:FastMatrixMultiplication}) comes from Definition \ref{Definition:incidencematrix} for incidence matrices and the standard definition of matrix multiplication applied to matrices $\mathbf{M}_1$ and $\mathbf{M}_2$.
\end{proof}

We summarize all definitions, propositions and observations in the Theorem \ref{Theorem:cd}.

\begin{theorem}
	Let $\mathbf{P}_{a, b}$ be a quasi-binary and quasi-orthogonal $n\times n$ matrix with even $n$ over a finite field $\mathbb{F}_{2^m}$ of characteristic $2$ and let  $\mathbf{P}_{c, d}$ is its inverse matrix i.e. $\mathbf{P}^{-1}_{a, b} = \mathbf{P}_{c, d}$. Then it holds that:
	\begin{equation}
	\begin{cases}
	c = & \dfrac{a}{a^2 + b^2}\\
	d = & \dfrac{b}{a^2 + b^2}
	\end{cases}.
	\label{Equation:abcd}
	\end{equation}
	\label{Theorem:cd}
\end{theorem}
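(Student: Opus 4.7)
The plan is to verify directly that $\mathbf{P}_{a,b}\cdot (\mathbf{P}_{a,b})^{T}_{c,d}=I$ by computing a generic $(i,j)$-entry of the product and imposing the identity constraints. Let $\mathbf{P}=(p_{il})$ denote the underlying binary orthogonal matrix, so that $(\mathbf{P}_{a,b})_{il}$ equals $a$ when $p_{il}=0$ and $b$ when $p_{il}=1$, while $((\mathbf{P}_{a,b})^{T}_{c,d})_{lj}$ equals $c$ when $p_{jl}=0$ and $d$ when $p_{jl}=1$. The orthogonality $\mathbf{P}\mathbf{P}^{T}=I$ over $\mathbb{F}_2$ encodes two combinatorial facts: each row of $\mathbf{P}$ contains an odd number $k_i$ of ones, and any two distinct rows share an even number of common $1$-positions.

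For fixed $i,j$, I partition the summation range $\{0,\ldots,n-1\}$ into the four classes determined by the pair $(p_{il},p_{jl})$, with cardinalities $n_{00},n_{01},n_{10},n_{11}$. The $(i,j)$-entry of the product then reads
\[
n_{00}\,ac+n_{01}\,ad+n_{10}\,bc+n_{11}\,bd,
\]
where the integer coefficients may be replaced by their residues modulo $2$ since the ambient field has characteristic $2$. From $n_{10}+n_{11}=k_i$, $n_{01}+n_{11}=k_j$ and $n_{00}+n_{01}+n_{10}+n_{11}=n$, together with $n$ even, $k_i,k_j$ odd, and (for $i\neq j$) $n_{11}$ even, I will read off the parities of all four counts in each case.

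For $i=j$ the parities of $(n_{00},n_{01},n_{10},n_{11})$ are $(1,0,0,1)$, collapsing the diagonal entry to $ac+bd$, whose equality with $1$ furnishes the constraint $ac+bd=1$. For $i\ne j$ the parities become $(0,1,1,0)$, so the off-diagonal entry reduces to $ad+bc$, whose equality with $0$ yields $ad+bc=0$. Solving this two-by-two linear system by eliminating $d=bc/a$ from the second equation and substituting into the first gives $c(a^{2}+b^{2})/a=1$, whence $c=a/(a^{2}+b^{2})$ and $d=b/(a^{2}+b^{2})$, as claimed.

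The one delicate step is the parity bookkeeping in characteristic $2$; once the four residues are pinned down, the remaining algebra is elementary. I should also remark in passing that $a\neq b$ combined with characteristic $2$ gives $a^{2}+b^{2}=(a+b)^{2}\neq 0$, so the division defining $c$ and $d$ is legitimate in $\mathbb{F}_{2^m}$.
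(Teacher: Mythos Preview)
Your proof is correct and follows essentially the same line as the paper's: derive the two constraints $ac+bd=1$ and $ad+bc=0$ from the diagonal and off-diagonal entries of $\mathbf{P}_{a,b}(\mathbf{P}_{a,b})^{T}_{c,d}$, then solve the linear system. The paper's argument is terser---it invokes Proposition~\ref{Proposition:odd} and jumps directly to the two equations---whereas you supply the parity bookkeeping (odd row weights and even pairwise intersections from $\mathbf{P}\mathbf{P}^{T}=I$ over $\mathbb{F}_2$) that actually justifies why all other terms cancel in characteristic~$2$, and you also verify $a^{2}+b^{2}=(a+b)^{2}\neq 0$ so that the formulas for $c,d$ are well-defined; both are details the paper leaves implicit.
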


\begin{proof}
	Since $n$ is even, we can use Proposition \ref{Proposition:odd} and by multiplying the $i$--th row with the $i$--th column we have that $a c + b d = 1 $, while when multiplying the $i$--th row with the $j$--th column ($i \ne j$) we have that $ b c + a d = 0 $. Thus we obtain the following system of two equations:
	\begin{equation}
	\begin{cases}
	a c + b d = & 1\\
	b c + a d = & 0
	\end{cases}.
	\label{Equation:InitialSystem}
	\end{equation}
	Knowing $c$ and $d$ we get the following solution:
	\begin{equation}
		\begin{cases}
			a = & \dfrac{c}{c^2 - d^2}\\
			b = & \dfrac{b}{d^2 - c^2}
		\end{cases}.
		\label{Equation:SoultionGeneral}
	\end{equation}
	Since the field is $\mathbb{F}_{2^m}$ of characteristic $2$, Equation (\ref{Equation:abcd}) follows directly.
\end{proof}

As a result of the previous analysis we designed  the algorithm $\mathtt{RandomOrthogonalBinaryMatrix}(n)$ for generation of binary orthogonal matrices of size $n\times n$ (given in Table \ref{Alg:RandomOrthogonalBinaryMatrix}). Note that we chose 6 iterations in Step 3 of the algorithm $\mathtt{RandomOrthogonalBinaryMatrix}(n)$. That is an arbitrary value obtained by our concrete experiments, that achieves the average Hamming weight of the columns and rows in the produced orthogonal matrix to be distributed around $\frac{n}{2}$. Further research is needed to analyze the possibility this number to be less than 6, and still to have a statistical distribution of the Hamming weight of the columns and rows to be around $\frac{n}{2}$.

\begin{table}[!]
	\centering
	\normalsize
	\begin{tabular}{|c|}
		\hline
		\parbox{7.5cm}{\vspace{0.1cm} \centering \bf $\mathtt{RandomOrthogonalBinaryMatrix}$ \vspace{0.1cm}}\\
		\hline
		\parbox{7.5cm}{\vspace{0.1cm}{\bf Input:} $n$,\\
			{\bf Output:} A random orthogonal $n\times n$ binary matrix $\textbf{P}$. 
		\vspace{0.1cm}} \\
		\hline
		\parbox{7.5cm}{
			\begin{enumerate}
				\item For the given $n$ find the corresponding triplet $(n, k, rot)$ from the Table \ref{Table:nkrot}. If the triplet is not present, Return $\mathtt{Error}$
				\item Initialize $n$ sets $D^{(0)}_i = \{i\}, i \in \{0,\ldots,n-1\}$ as in Proposition \ref{Proposition:FastMatrixMultiplication}
				\item For $j=1$ to 6 do 
				\begin{itemize}
					\item Generate a random cyclic Latin Rectangle $L_j$ of size ${k \times n}$ with columns $C^{(j)}_0, C^{(j)}_1, \ldots, C^{(j)}_{n-1}$
					\item $D^{(j)}_i = \bigoplus_{l\in C^{(j)}_i} D^{(j-1)}_l, \ \ \text{for\ \ } i=0, 1, \ldots n-1$
				\end{itemize}
				\item Set a binary matrix $\textbf{P}$ of size $n\times n$ for which the sets $D^{(6)}_i, i \in \{0,\ldots,n-1\}$ are the support sets for its columns
				\item Return $\textbf{P}$.
			\end{enumerate}
		}\\
		\hline
	\end{tabular}
	\caption{The algorithm for fast construction of orthogonal binary matrices of size $n\times n$. }\label{Alg:RandomOrthogonalBinaryMatrix}
\end{table}

We have implemented our algorithm in several programming languages such as C, SageMath (which is using a variant of Python) and in Mathematica. An example of the Mathematica output of a randomly generated $16\times 16$ quasi-binary and quasi-orthogonal matrix is given in Figure \ref{Figure:Random16x16} and Figure \ref{Figure:Random16x16Inv}.

\begin{figure}[h!]
	\centering
	\includegraphics[scale=0.5]{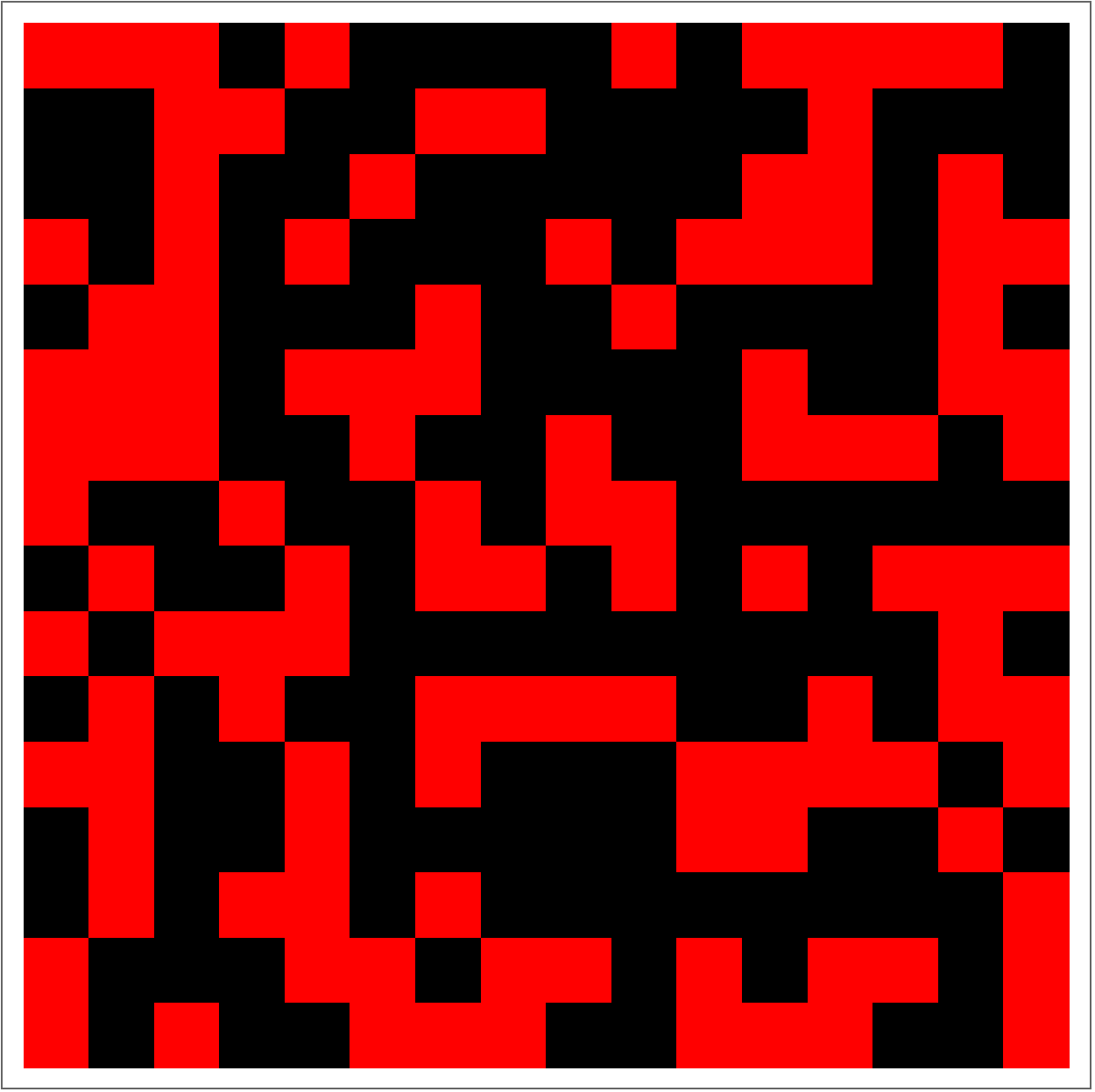}
	\caption{A random $16\times 16$ quasi-binary and quasi-orthogonal matrix $\mathbf{P}_{a, b}$ produced by the Algorithm $\mathtt{RandomOrthogonalBinaryMatrix}$ implemented in Mathematica. Note that the red squares correspond to the values of $a$ and the black squares correspond to $b$ where $a, b \in \mathbb{F}_{2^m}$.}
	\label{Figure:Random16x16}
\end{figure}

\begin{figure}[h!]
	\centering
	\includegraphics[scale=0.5]{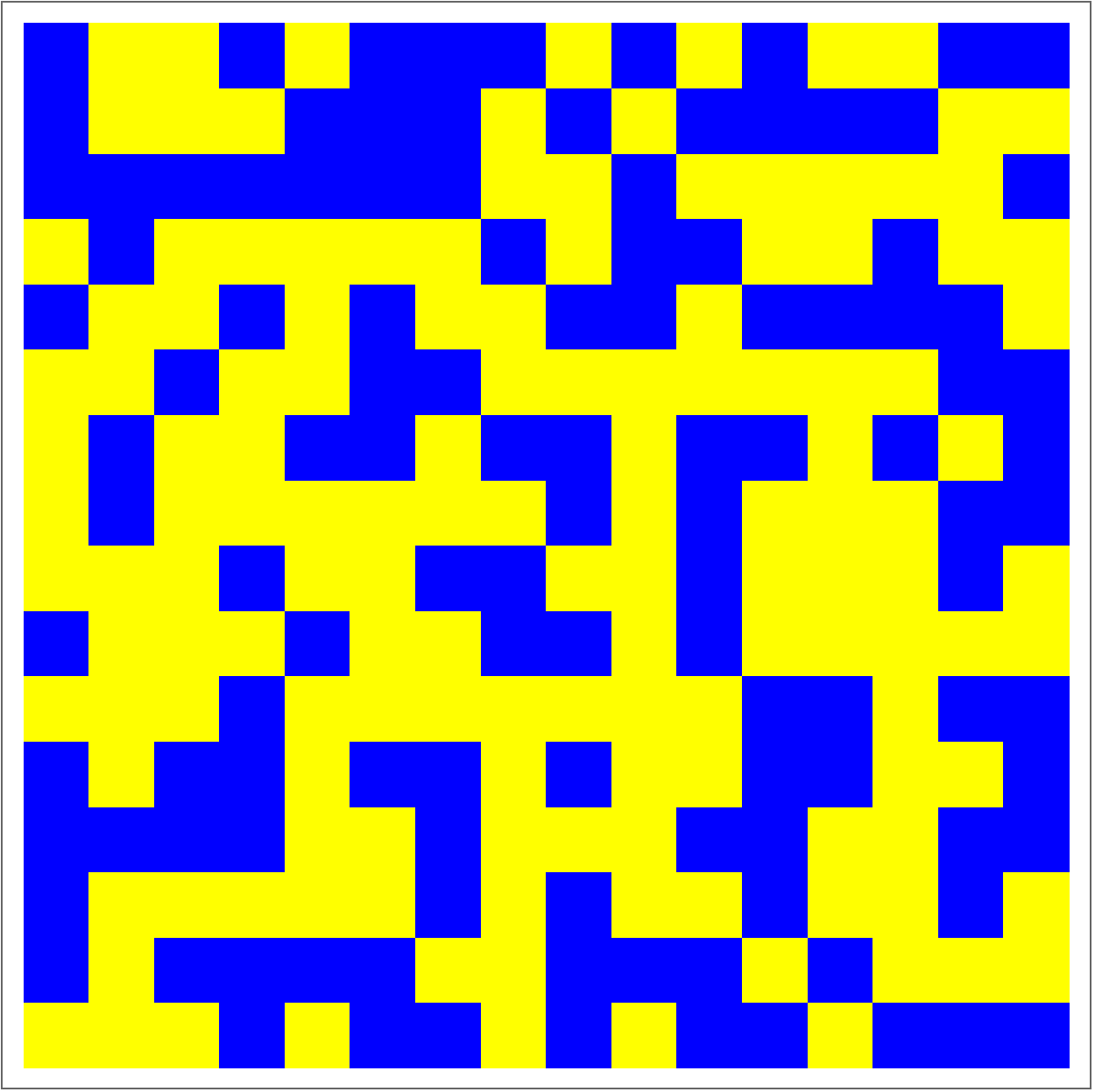}
	\caption{The corresponding inverse $16\times 16$ matrix $\mathbf{P}_{c, d}$. The blue squares correspond to the values of $c$ and the yellow squares correspond to $d$ where $a, b, c$ and $d$ are connected with the relation (\ref{Equation:abcd}).}
	\label{Figure:Random16x16Inv}
\end{figure}

\section{Conclusions}
\label{sec:conclusions}
We defined a new family of matrices over finite fields that we named Quasi-Binary and Quasi-Orthogonal Matrices. The reasons why we call those matrices quasi-binary are due to the fact that they have only two elements $a, b \in \mathbb{F}_q$, but those elements are not $0$ and $1$. The reasons why we call them quasi-orthogonal are due to the fact that their inverses are obtained not just by a simple transposition, but there is a need for an additional operation: a replacement of $a$ and $b$ by two other values $c$ and $d$. We gave a simple relation between $a, b, c$ and $d$ for any finite field, in particular for a finite field with characteristic 2. 

Our construction is based on incident matrices from cyclic Latin Rectangles and our algorithm is very efficient since it completely avoids operations of matrix-matrix or matrix-vector multiplications. We have implemented it in several computer languages such as C, SageMath (Python) and Mathematica. 

\bibliographystyle{IEEEtran}

\bibliography{IEEEabrv,bib}

\end{document}